\documentclass[12pt]{article}

\usepackage[letterpaper,
            top=1in, bottom=1in,
            left=1in, right=1in,
            headheight=14pt]{geometry}
\usepackage[T1]{fontenc}

\usepackage{amsmath,amssymb,amsthm}
\newtheorem{theorem}{Theorem}

\usepackage{graphicx}
\usepackage{subcaption}
\usepackage{tikz}
\usetikzlibrary{arrows.meta,positioning,shapes.geometric}

\usepackage{booktabs}
\usepackage{makecell}
\usepackage{array}
\usepackage[table]{xcolor}
\definecolor{HeaderBlue}{RGB}{220,230,245}
\definecolor{BestGreen}{RGB}{220,245,220}

\usepackage{algorithm}
\usepackage{algorithmic}

\usepackage{xcolor}
\usepackage{url}
\usepackage[colorlinks=true,linkcolor=blue,citecolor=blue,urlcolor=blue]{hyperref}
\usepackage{cite}
\usepackage{fancyhdr}
\usepackage{parskip}
\usepackage{titlesec}
\titleformat{\section}{\large\bfseries}{\thesection.}{0.5em}{}
\titleformat{\subsection}{\normalsize\bfseries}{\thesubsection.}{0.5em}{}
\titleformat{\subsubsection}{\normalsize\itshape}{\thesubsubsection.}{0.5em}{}
\titlespacing*{\section}{0pt}{14pt}{6pt}
\titlespacing*{\subsection}{0pt}{10pt}{4pt}
\titlespacing*{\subsubsection}{0pt}{8pt}{2pt}

\fancypagestyle{plain}{%
  \fancyhf{}%
  \fancyfoot[C]{\footnotesize \copyright~2026 IEEE. A two-page version of this work was published in the
    \textit{2026 IEEE 23rd Consumer Communications \& Networking Conference (CCNC)}.
    DOI: \href{https://doi.org/10.1109/CCNC65079.2026.11366526}{10.1109/CCNC65079.2026.11366526}}%
}

\begin{document}

%% ---------- Title block ----------
\begin{center}
  {\LARGE\bfseries Statistical Analysis and Optimization of the MFA\\[6pt]
  Protecting Private Keys}

  \vspace{14pt}

  {\large
    Mahafujul Alam \quad Julie B.\ Heynssens \quad Bertrand Francis Cambou
  }

  \vspace{6pt}

  {\normalsize
    School of Informatics, Computing and Cyber Systems,
    Northern Arizona University, Flagstaff, AZ 86011, USA\\
    \texttt{ma3755@nau.edu} \enspace
    \texttt{Julie.Heynssens@nau.edu} \enspace
    \texttt{Bertrand.Cambou@nau.edu}
  }

  \vspace{6pt}

  {\small\itshape
    \copyright~2026 IEEE. A two-page version of this paper was published in the
    \textit{2026 IEEE 23rd Consumer Communications \& Networking Conference (CCNC)}.\\
    DOI: \href{https://doi.org/10.1109/CCNC65079.2026.11366526}{10.1109/CCNC65079.2026.11366526}
  }
\end{center}

\thispagestyle{plain}
\vspace{4pt}
\hrule
\vspace{8pt}

%% ---------- Abstract ----------
\begin{center}
  \textbf{Abstract}
\end{center}
\noindent
In the current information age, asymmetrical cryptography is widely used to protect
information and financial transactions such as cryptocurrencies. The loss of private keys can
have catastrophic consequences; therefore, effective MFA schemes are needed. In this paper,
we focus on generating ephemeral keys to protect private keys. We propose a novel
bit-truncation method in which the most significant bits (MSBs) of response values derived
from facial features in a template-less biometric scheme are removed, significantly improving
both accuracy and security. A statistical analysis is presented to optimize an MFA comprising
at least three factors: template-less biometrics, an SRAM PUF-based token, and passwords.
The results show a reduction in both false-reject and false-acceptance rates, and the
generation of error-free ephemeral keys.

\vspace{6pt}
\noindent\textbf{Keywords:} Zero-knowledge MFA, template-less biometrics, bit-chopping
technique, SRAM PUF, ephemeral key generation, private key protection.

\vspace{6pt}
\hrule
\vspace{6pt}

%% ============================================================
\section{Introduction}
\label{sec:intro}

Asymmetric cryptography has been successfully implemented in a wide range of applications
vital to the information age, such as cryptocurrencies, secure mail, public key
infrastructure, and credit card protection. In distributed cryptocurrency applications,
private keys are used to sign all transactions, while public keys are required to verify
them. To strengthen this process, Multi-Factor Authentication (MFA) schemes have been
proposed to protect private keys with ephemeral keys~\cite{ChallengeResponsePair2025a},
with the objective of operating in secure zero-trust distributed environments while also
delivering zero-knowledge MFA. Here, the term zero-knowledge MFA is defined as an
authentication method in which all factors are tested concurrently, thereby preventing
serial attacks against individual factors. The development of such a protection scheme is
considerably more demanding than conventional MFA schemes, which typically deliver only
go/no-go authentication. In contrast, the ephemeral keys generated in our scheme cannot
contain a single erroneous bit, and any change in one factor must invalidate the transaction.

In this work, we propose a bit-chopping technique applied to template-less biometrics. By
removing the most significant bits (MSBs) from distance measurements between facial
landmarks and challenge points, coarse variations that lead to false acceptances are
eliminated, while subject-specific detail is retained. We conduct a statistical analysis of
this scheme by varying ADC precision and the number of chopped MSBs, evaluating its impact
on false-acceptance, false-reject, and bit-error rates.

We further analyze an
SRAM-PUF~\cite{SiliconPhysicalRandom2002,herderPhysicalUnclonableFunctions2014}-based
token, focusing on the number of enrollment cycles required to identify unstable cells. The
objective is to determine the optimal power-on-off cycle count that minimizes bit errors
while keeping the enrollment process efficient. Together, the optimized biometric scheme
with MSB bit-chopping and the SRAM-PUF scheme form the basis of a zero-knowledge MFA
protocol that generates stable, error-free ephemeral keys for protecting the private keys.

%% ============================================================
\section{Related Work}
\label{sec:related}

In Challenge-Response Pair (CRP) Mechanisms and Multi-Factor Authentication Schemes to
Protect Private Keys~\cite{ChallengeResponsePair2025a}, a scheme utilizing a CRP mechanism
to generate ephemeral keys for protecting private keys in distributed networks is presented.
The scheme incorporates a template-less biometric, an SRAM PUF, and a digital file,
combined to generate a cryptotable. This cryptotable can then be challenged to produce
responses, from which cryptographic ephemeral keys are derived. The uniqueness of this
approach lies in the ability to generate an effectively infinite number of keys on demand.

The work further extended this idea by using the generated keys as seeds for the public key
generation part of the CRYSTALS-Dilithium (ML-DSA)~\cite{CRYSTALSDilithium} digital
signature protocol, as well as to generate ephemeral keys that protect the private key
produced by the CRYSTALS-Dilithium digital signature scheme. Through the analysis, the
authors presented the distribution of 0, 1, and X generated from each reference factor, as
well as from the combined factors, i.e., cryptotable. In addition, they applied the
Response-Based Cryptographic (RBC) scheme~\cite{StatisticalAnalysisOptimize2020} to
evaluate key error rates under different fragmentation levels. In testing, user-specific
passwords were employed to demonstrate False Reject Rate (FRR) and False Acceptance Rate
(FAR) results across different fragmentation levels, validating the practical use case of the
scheme.

In this work, we performed a thorough statistical analysis of the proposed architecture to
optimize it for real-time use cases. Specifically, we introduced a novel bit-chopping scheme
of MSBs that not only strengthens security but also improves overall accuracy, as reflected
in the FAR and FRR results. Unlike the earlier study, our work relies on the same password
across all users and the same SRAM PUF-based token, thereby demonstrating the strength of
the system in discerning different individuals purely through the template-less biometric. We
intentionally excluded the digital file factor, as it consistently produced stable reference
tables, a finding already established in the prior work.

Furthermore, in this optimization study, we analyzed the template-less biometric scheme and
the SRAM-PUF token separately to identify the most effective configurations. For the
biometric scheme, we investigated the optimal accuracy-bit parameter, defined as the
Euclidean distance between a challenge point and a landmark point converted into a binary
representation, as well as the chopped-MSB parameter, defined as the truncation of the MSBs
from the accuracy-bit. For the SRAM-PUF token, we examined the number of enrollment cycles
required in order to determine the minimum cycle count necessary for a reliable enrollment
operation.

Beyond these contributions, several prior studies have also explored integrating PUFs with
biometrics under different assumptions and system
models~\cite{PUFBasedMutualAuthentication2023,UDhashingPhysicalUnclonable2019,SecureMutualAuthentication2021,PracticalSecureIoT2016,MutualAuthenticationIoT2017,LightweightHighlySecure2017,FacialBiohashingBased2018,SecureEfficientAKE2023}.
These schemes generally rely on server-side storage of challenge--response pairs or helper
data, assume error-free PUF responses, or require the reuse of server-provisioned secrets.
While these approaches highlight the potential of combining PUFs and biometrics, they often
suffer from practical limitations such as centralized vulnerabilities, error sensitivity, or
efficiency trade-offs. By contrast, our work eliminates server-side storage of CRPs,
introduces ephemeral key generation to ensure forward secrecy, and operates within a
zero-trust framework with mutual authentication, addressing many of the shortcomings
identified in earlier protocols.

%% ============================================================
\section{Methodology}
\label{sec:method}

We evaluated 10 SRAM-PUF tokens and 6,000 AI-generated images~\cite{GeneratedPhotos}. The
dataset comprises 400 individuals with 25 images each: 10 low-variation moment images and 15
high-variation images. From this, 200 individuals were selected; 100 were used for
enrollment and 100 for testing.

For the False Acceptance Rate (FAR), enrollment tables from the 100 enrolled individuals
were compared against single frames from the 15 variation images of the other 100, producing
150,000 tests. For the False Reject Rate (FRR), each individual's reference table was
compared with its own 10 variation images, yielding 15,000 tests. FRR experiments were
repeated 10 times to match the FAR test count, enabling direct comparison in
Figure~\ref{fig:histograms}.

For SRAM-PUF evaluation, each device produced 202 binary reads (Figure~\ref{fig:puf_error}),
from which 201 ternary enrollment files were derived. One hundred enrollments per token were
selected. In each test, 256 addresses from an enrollment file were compared against
subsequent binary reads over 1,000 trials. The procedure was repeated across 100 subsequent
reads, and results were averaged per enrollment. Figure~\ref{fig:puf_error} shows the
average key error per file.

The test script followed two phases: enrollment and ephemeral key generation. Enrollment was
executed independently for SRAM-PUF tokens and biometric data, as detailed in
Algorithms~\ref{alg:sram} and~\ref{alg:bio}.

We further analyzed the impact of ADC precision (4--8 bits) and MSB removal (0--4 bits) on
the biometric system, and the minimum enrollment time for SRAM-PUFs. Results are shown in
Figures~\ref{fig:histograms} and~\ref{fig:puf_error}.

Finally, we evaluated the combined system. Table~\ref{tab:top5} summarizes the
best-performing configurations, and Figure~\ref{fig:keybias} confirms that the generated
keys exhibit no detectable bias. All experiments were conducted on an
Intel\textsuperscript{\textregistered} Core\texttrademark\ i9-9900K CPU @ 3.60\,GHz with
32\,GB RAM and an NVIDIA RTX 3070 GPU.

\subsection{Enrollment Phase}

\subsubsection{SRAM PUF Token}

During the enrollment phase of the SRAM-PUF token, repeated power-cycling of the device
yields multiple million-bit SRAM dumps. From these dumps, a password- and RN1-seeded
generator deterministically selects $256\times256$ cell indices. Through repeated cycles, we
construct a ternary table $T \in \{0,1,\mathrm{X}\}^{256\times256}$.

Each individual read produces a binary table; the final ternary table is obtained by
superimposing multiple reads. If a specific cell position consistently retains the same value
across all reads, it is considered a stable cell, and its bit value is preserved.
Conversely, if flips are observed in sequential reads at a given cell position, that position
is marked as X, denoting a flaky cell. The complete process is described in
Algorithm~\ref{alg:sram}.

\subsubsection{Template-less Biometrics}

The enrollment process for template-less biometrics is based on a challenge--response
mechanism. Facial landmarks are detected on enrollment frames using
Dlib~\cite{DlibmlMachineLearning}. Randomly generated challenge points are placed on each
frame, and Euclidean distances between landmarks and challenge points are computed. These
distances form the raw measurements that are transformed into binary responses.

To improve stability, we apply Gray coding and introduce a bit-chopping step: MSBs of each
distance measurement are removed before conversion. MSB chopping reduces overlap between
biometric responses by filtering out large but uninformative variations, while preserving
lower-order bits that encode subject-specific distinctions.

Finally, 64 landmark indices are deterministically selected from a password-derived seed.
From multiple frames, the results are superimposed into a $256\times256$ table: stable
positions retain their binary values, while unstable ones are marked as X. In
Algorithm~\ref{alg:bio}, we present the step-by-step process that demonstrates how this
template-less biometric scheme operates in practice.

\begin{algorithm}[t]
\caption{SRAM PUF Table Construction}
\label{alg:sram}
\begin{algorithmic}[1]
\renewcommand{\algorithmicrequire}{\textbf{Input:}}
\renewcommand{\algorithmicensure}{\textbf{Output:}}
\REQUIRE Cycles $N$; serial port $P$; password \textit{pwd}; $RN1 \in \{0,1\}^{512}$;
         PUF size $M=1{,}048{,}576$ bits ($=131{,}072$ bytes); table size $C=256\times256$
\ENSURE $T \in \{0,1,\mathrm{X}\}^{256\times256}$
\STATE \textbf{Acquire million-bit dumps:}
\FOR{$k \leftarrow 1$ \TO $N$}
  \STATE Power on; wait $t_{\text{settle}}$; open $P$; trigger read;
  \STATE Read exactly 131,072 bytes $\to$ bitstring $R^{(k)} \in \{0,1\}^M$;
  \STATE Power off; wait $t_{\text{sleep}}$
\ENDFOR
\STATE \textbf{Derive keyed addresses ($RN1 \| \text{password}$):}
\STATE $RN1 \leftarrow 512$ uniform random bits;
\STATE $\textit{pwd}_{512} \leftarrow \text{SHA3-512}(\textit{pwd})$;
\STATE Byte stream $z \leftarrow \text{SHAKE-256}(RN1 \| \textit{pwd}_{512})$;
\STATE $S \leftarrow \emptyset$;
\WHILE{$|S| < C$}
  \STATE $u \leftarrow$ next uint32 from $z$; \quad $s \leftarrow u \bmod M$;
  \IF{$s \notin S$} \STATE insert $s$ into $S$ \ENDIF
\ENDWHILE
\STATE Let $S=[s_0,\ldots,s_{C-1}]$ be the ordered list of selected cell indices.
\STATE \textbf{Bit stability at selected cells:}
\FOR{$i \leftarrow 0$ \TO $C-1$}
  \IF{$R^{(1)}[s_i] = R^{(2)}[s_i] = \cdots = R^{(N)}[s_i]$}
    \STATE $U[i] \leftarrow R^{(1)}[s_i]$
  \ELSE
    \STATE $U[i] \leftarrow \mathrm{X}$
  \ENDIF
\ENDFOR
\STATE Generate the Token Table $T$.
\end{algorithmic}
\end{algorithm}

\begin{algorithm}[t]
\caption{BioTable Construction}
\label{alg:bio}
\begin{algorithmic}[1]
\renewcommand{\algorithmicrequire}{\textbf{Input:}}
\renewcommand{\algorithmicensure}{\textbf{Output:}}
\REQUIRE Password \textit{pwd}; $RN1\in\{0,1\}^{512}$; frames $N$; frame size $F=256$;
         table size $C=F^2$; accuracy-bits $g$; chopped-MSBs $m$
\ENSURE Ternary BioTable $B \in \{0,1,\mathrm{X}\}^{F\times F}$
\STATE \textbf{Face alignment \& landmarks:} Obtain aligned face chips of size $F\times F$;
       detect 68 landmarks per chip; seed RNG with SHA256(\textit{pwd}) and select 64
       unique landmark indices.
\STATE \textbf{Challenge generation:} Derive $C$ pseudorandom challenge coordinates in
       $[0,F)\times[0,F)$ from $\text{SHAKE-256}(RN1\|\text{SHA3-512}(\textit{pwd}))$.
\STATE \textbf{Per-frame distance features:} For each frame $f$ and challenge $q_i$,
       compute Euclidean distances to 64 selected landmarks:
       $d_i^{(f)}\in\mathbb{R}^{64}$.
\STATE \textbf{Quantization to Gray code:} $D_{\max} \leftarrow \sqrt{2}\,F$;
\FOR{each frame $f$ and challenge $i$}
  \FOR{$j=1:64$}
    \STATE $w \leftarrow \bigl\lfloor \tfrac{d_i^{(f)}[j]}{D_{\max}}\cdot 2^g \bigr\rfloor$;
    \STATE Take $g$-bit binary of $w$, chop $m$ MSBs; convert remaining $(g{-}m)$ bits
           to Gray code; concatenate over $j$ to get bitarray $R_i^{(f)}$.
  \ENDFOR
\ENDFOR
\STATE \textbf{Multi-frame unanimity $\to$ ternary map:}
       Initialize $U\in\{0,1,\mathrm{X}\}^C$;
\FOR{each bit index $i=1:C$}
  \IF{all frames agree at bit $i$}
    \STATE $U[i] \leftarrow$ unanimous bit
  \ELSE
    \STATE $U[i] \leftarrow \mathrm{X}$
  \ENDIF
\ENDFOR
\STATE Construct BioTable $T$.
\end{algorithmic}
\end{algorithm}

\subsection{Ephemeral Key Generation}

The key generation process for both the server and client follows the same structure;
however, unlike the enrollment phase, only a single read is used at this stage. The server
transmits the ternary table information to the client along with $RN1$ and $RN2$. On the
server side, a template-less format combining the SRAM-PUF template-less biometric tables is
employed, and the final key is derived using $RN2$ together with the user password. The
client mirrors this procedure by generating binary tables from one-shot reads of the
SRAM-PUF token and facial biometric, while incorporating the ternary table received from the
server. Using the same $RN1$, $RN2$, and password ensures alignment with the server's
process. After constructing its key, the client applies Response-Based Cryptography (RBC) to
guarantee consistency with the server's reconstruction, while the server compares its own
generated key against the fragmented, hashed key received from the client, applying RBC
error correction~\cite{StatisticalAnalysisOptimize2020} to mitigate real-time noise. Through
this protocol, both parties converge on an identical key, which can then serve as a secure
seed for post-quantum cryptographic schemes or be directly applied for encryption and
decryption to protect private keys. To simplify the testing process, both the server and
client were executed within the same script. The complete procedure of ephemeral key
generation is illustrated in Figure~\ref{fig:protocol}.

\begin{figure*}[!t]
\centering
\scalebox{0.85}{%
\begin{tikzpicture}[
  >=Latex,
  node distance = 10mm and 26mm,
  every node/.style={font=\footnotesize\sffamily},
  boxnode/.style={draw, rounded corners, align=left, inner sep=4pt,
                  text width=7.0cm, minimum height=6mm},
  note/.style={align=left, inner sep=0pt},
  msg/.style={midway, above=4pt, fill=white, rounded corners=1pt,
              inner sep=1pt, font=\sffamily\footnotesize, text opacity=1}
]
\coordinate (srvx) at (-6.0,0);
\coordinate (clix) at ( 6.0,0);
\node[note] (srvTitle) at (srvx) {\large\bfseries Server};
\node[note] (cliTitle) at (clix) {\large\bfseries Client};
\node[boxnode, below=7mm of srvTitle] (srv0)
  {Enrollment setup: define fixed $C{=}F^2$ indices (cryptotable); parameters:
   $g$ = distance bits, $m$ = chopped MSBs (bit truncation),
   $Q$ = oversampled challenged numbers, $L$ = 256 bit key.};
\node[boxnode, below=7mm of cliTitle] (cli0)
  {Enrollment setup: client knows how to build $T',B'$ with same
   $(g,m,Q,L)$ and fixed indices.};
\node[boxnode, below=of srv0] (srv1)
  {Generate $RN1$ (enrollment) and $RN2$ (challenges).
   Merge $T$ and $B$ into composite $C$; compute ternary map $F$.};
\node[boxnode, below=of cli0] (cli1)
  {Prepare to construct one-shot $T',B'$ once $(RN1,RN2,F)$ are received.};
\draw[->, thick, shorten >=2pt, shorten <=2pt]
  (srv1.east) to[out=5, in=175, looseness=1.05]
  node[msg]{\textbf{send} $(RN1, RN2, F)$}
  (cli1.west);
\node[boxnode, below=10mm of srv1] (srv2)
  {Derive index sequence from $(RN2\!\Vert\!\mathsf{pwd})$:
   SHA3-512($\mathsf{pwd}$) $\rightarrow$ 512-bit digest;
   seed = SHAKE-256($RN2\Vert$digest);
   generate challenge indices $a_i \in [0,C{-}1]$.};
\node[boxnode, below=10mm of cli1] (cli2)
  {Use $(RN1,RN2,F)$ to build $T',B'$.
   Apply $F$: if $F_j{=}1$ mark $X$, else $T'_j \oplus B'_j$.
   Derive identical challenge index stream $a_i$.};
\node[boxnode, below=of srv2] (srv3)
  {Collect stable set $\mathcal{S}$ until at least $Q$ bits.
   Abort if $|\mathcal{S}|<L$.
   ($Q$ ensures oversampling; $L$ is usable key length.)};
\node[boxnode, below=of cli2] (cli3)
  {Collect stable set $\mathcal{S}'$ until at least $Q$ bits.
   Fail if $|\mathcal{S}'|<L$.};
\node[boxnode, below=of srv3] (srv4)
  {Select $L$ distinct indices (no replacement) from $\mathcal{S}$;
   concatenate $\Rightarrow$ server key $\mathbf{K}$.};
\node[boxnode, below=of cli3] (cli4)
  {Select $L$ distinct indices (no replacement) from $\mathcal{S}'$;
   concatenate $\Rightarrow$ client key $\mathbf{K'}$.};
\node[boxnode, below=of cli4] (cli5)
  {Compute fragmented hash $\mathbf{H_{\mathrm{frag}}(K')}$; prepare to send.};
\node[boxnode, below=of srv4] (srv6)
  {Verify received $\mathbf{H_{\mathrm{frag}}(K')}$ against server key $\mathbf{K}$;
   apply RBC to reconcile residual noise; finalize $\mathbf{K}$.};
\draw[->, thick, shorten >=2pt, shorten <=2pt]
  (cli5.west) to[out=185, in=-5, looseness=1.05]
  node[msg]{\textbf{send} $H_{\mathrm{frag}}(K')$}
  (srv6.east);
\foreach \i/\j in {0/1,1/2,2/3,3/4,4/6}
  { \draw[->] (srv\i.south) -- (srv\j.north); }
\foreach \i/\j in {0/1,1/2,2/3,3/4,4/5}
  { \draw[->] (cli\i.south) -- (cli\j.north); }
\end{tikzpicture}%
}
\caption{Protocol diagram for ephemeral key generation.}
\label{fig:protocol}
\end{figure*}
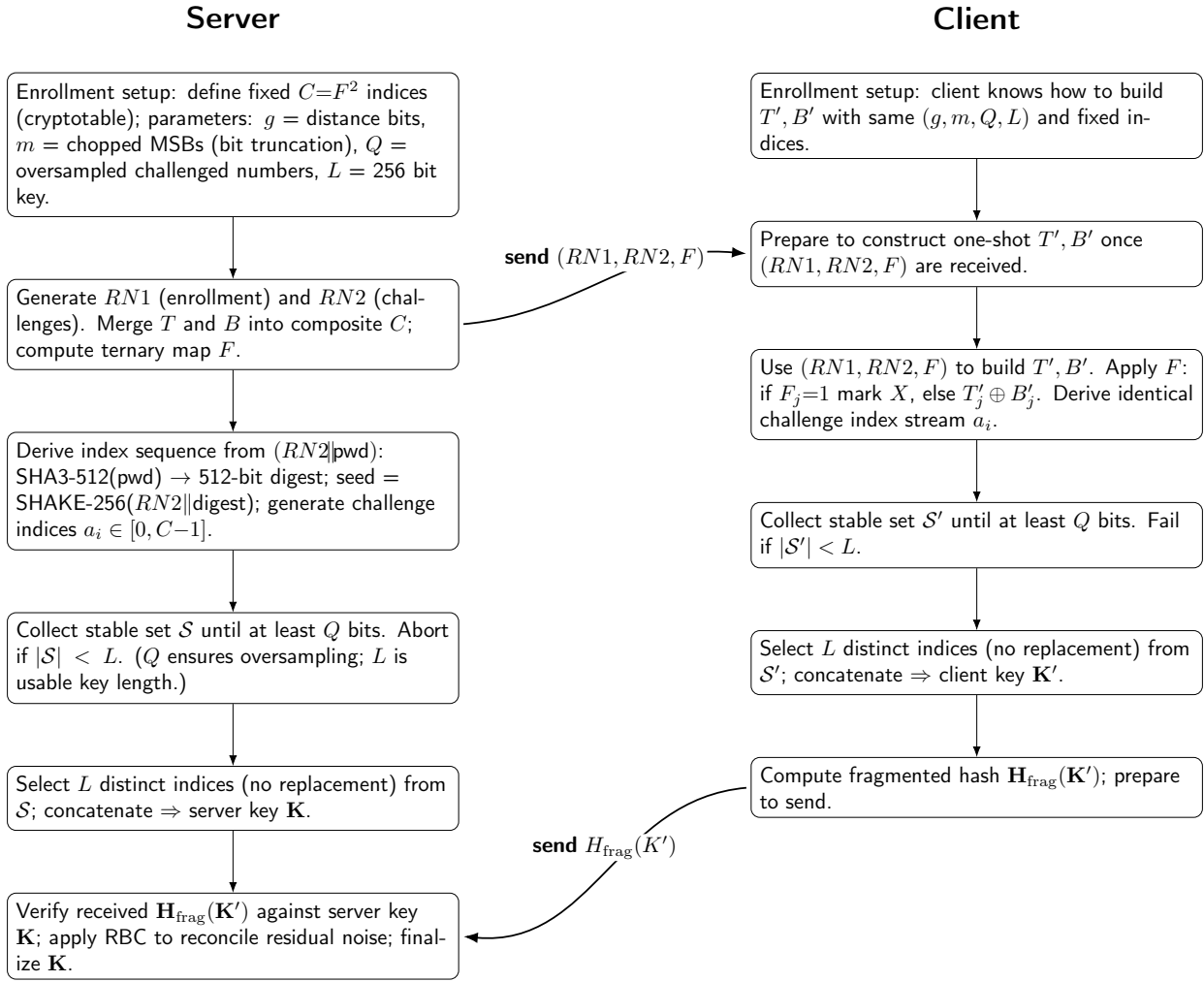

%% ============================================================
\section{Experimental Evaluation}
\label{sec:eval}

\subsection{Template-less Biometrics Analysis}

We tested a total of 15 configurations; however, in Figure~\ref{fig:histograms} we only
present the results for accuracy-bit and chopped-MSB configurations of (6,1), (6,2), (7,1),
(7,2), (8,1), and (8,2). Our experimental analysis shows that the gap between correct and
incorrect individuals increases as the number of distance bits grows, and further increases
with a larger number of chopped MSBs. While increasing accuracy-bits and chopped-MSBs has a
positive effect on differentiating between users---in other words, reduced FAR---we later
observed that it can also raise the FRR. Some overlap between correct and incorrect
individuals is visible, which occurs because the protocol is designed for straight-face
images and is less tolerant to angle variations. To evaluate these limitations, we performed
enrollment primarily on straight-face images and then conducted key generation using the
variation bucket of images. Our future goal is to design a protocol that supports 3D facial
data, which will eliminate these overlaps and improve robustness.

\begin{figure*}[t]
  \centering
  \begin{subfigure}[b]{0.32\textwidth}
    \includegraphics[width=\textwidth]{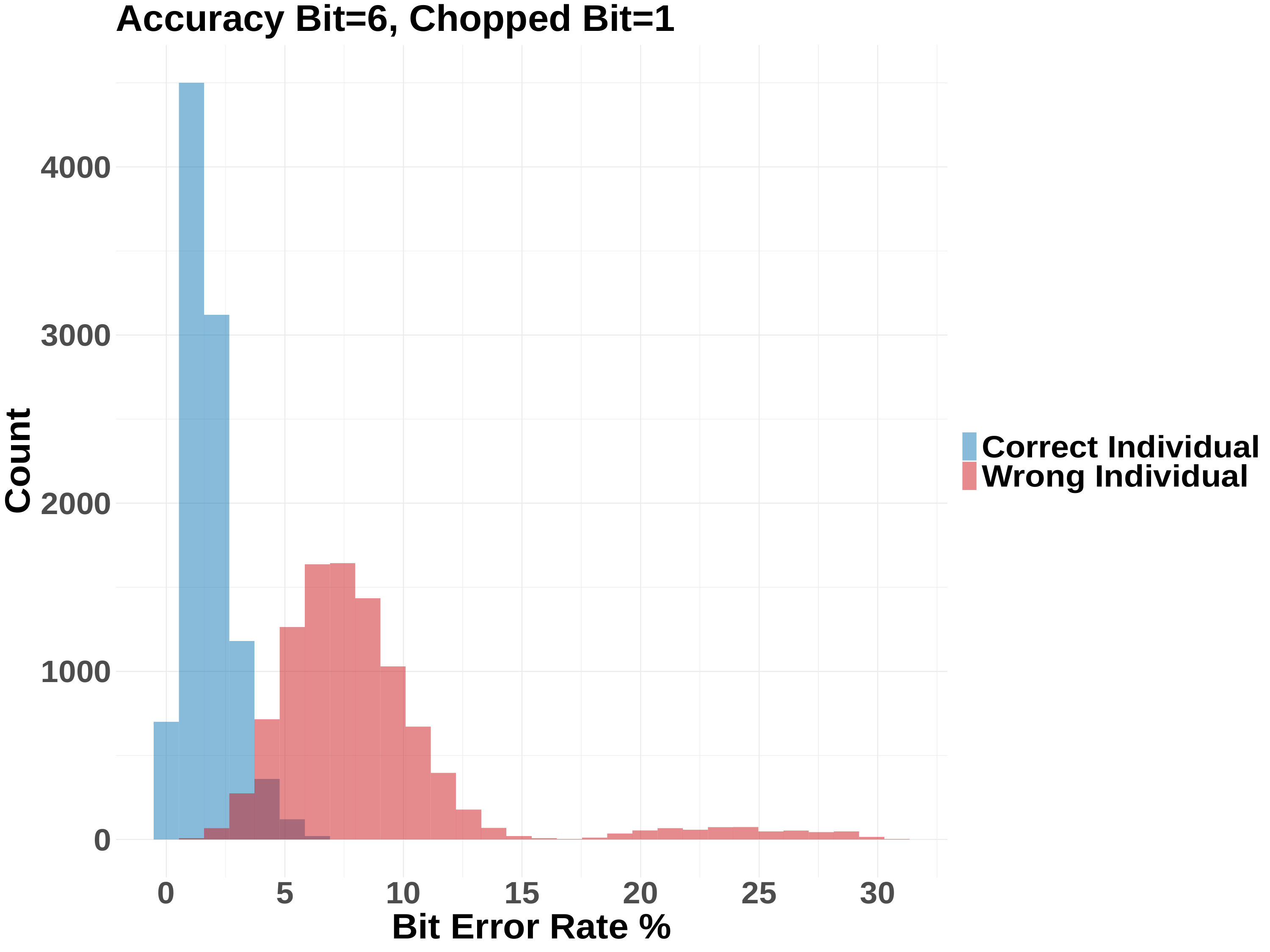}
    \caption{Accuracy Bit=6, Chopped Bit=1}\label{fig:h61}
  \end{subfigure}\hfill
  \begin{subfigure}[b]{0.32\textwidth}
    \includegraphics[width=\textwidth]{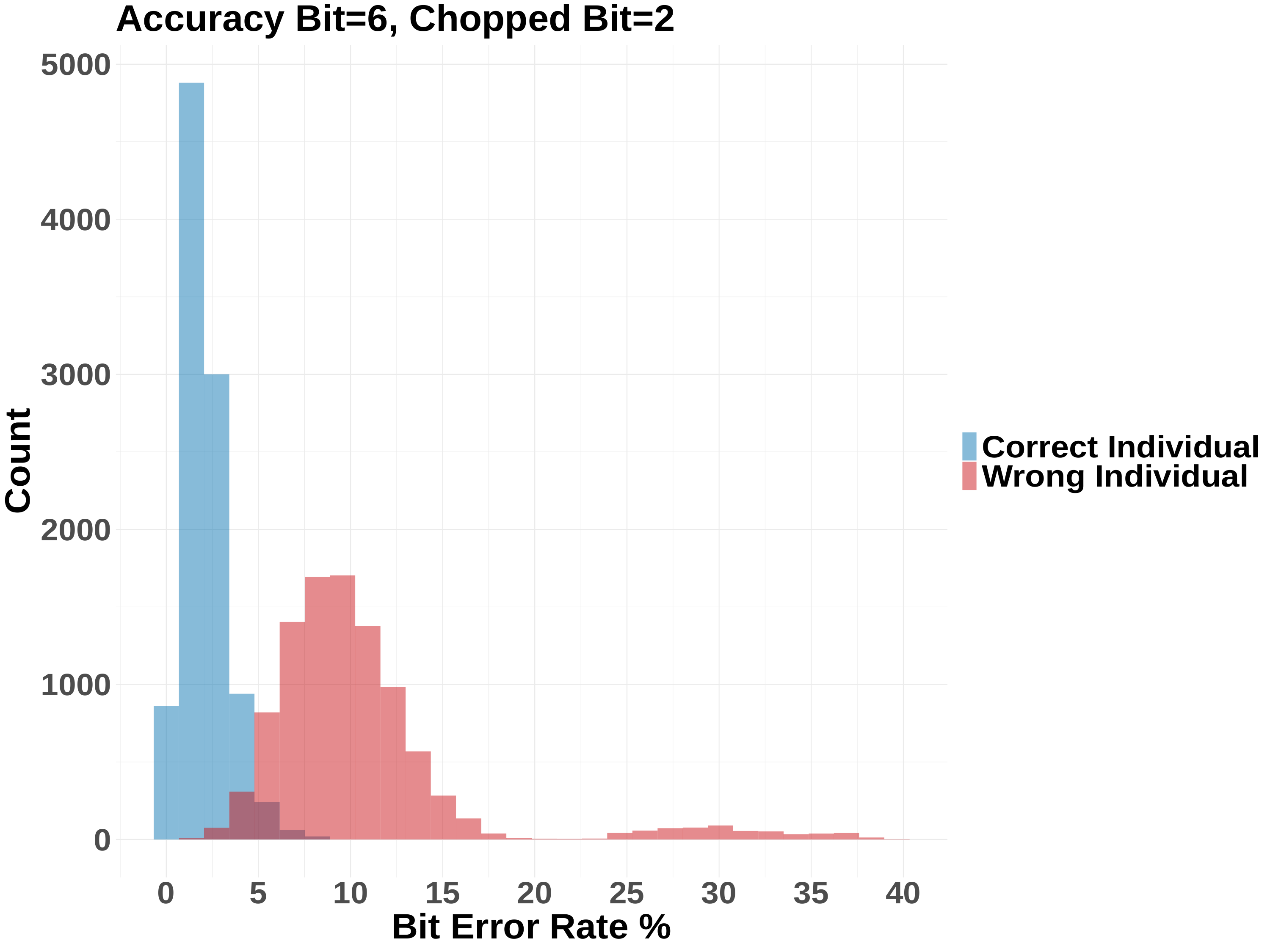}
    \caption{Accuracy Bit=6, Chopped Bit=2}\label{fig:h62}
  \end{subfigure}\hfill
  \begin{subfigure}[b]{0.32\textwidth}
    \includegraphics[width=\textwidth]{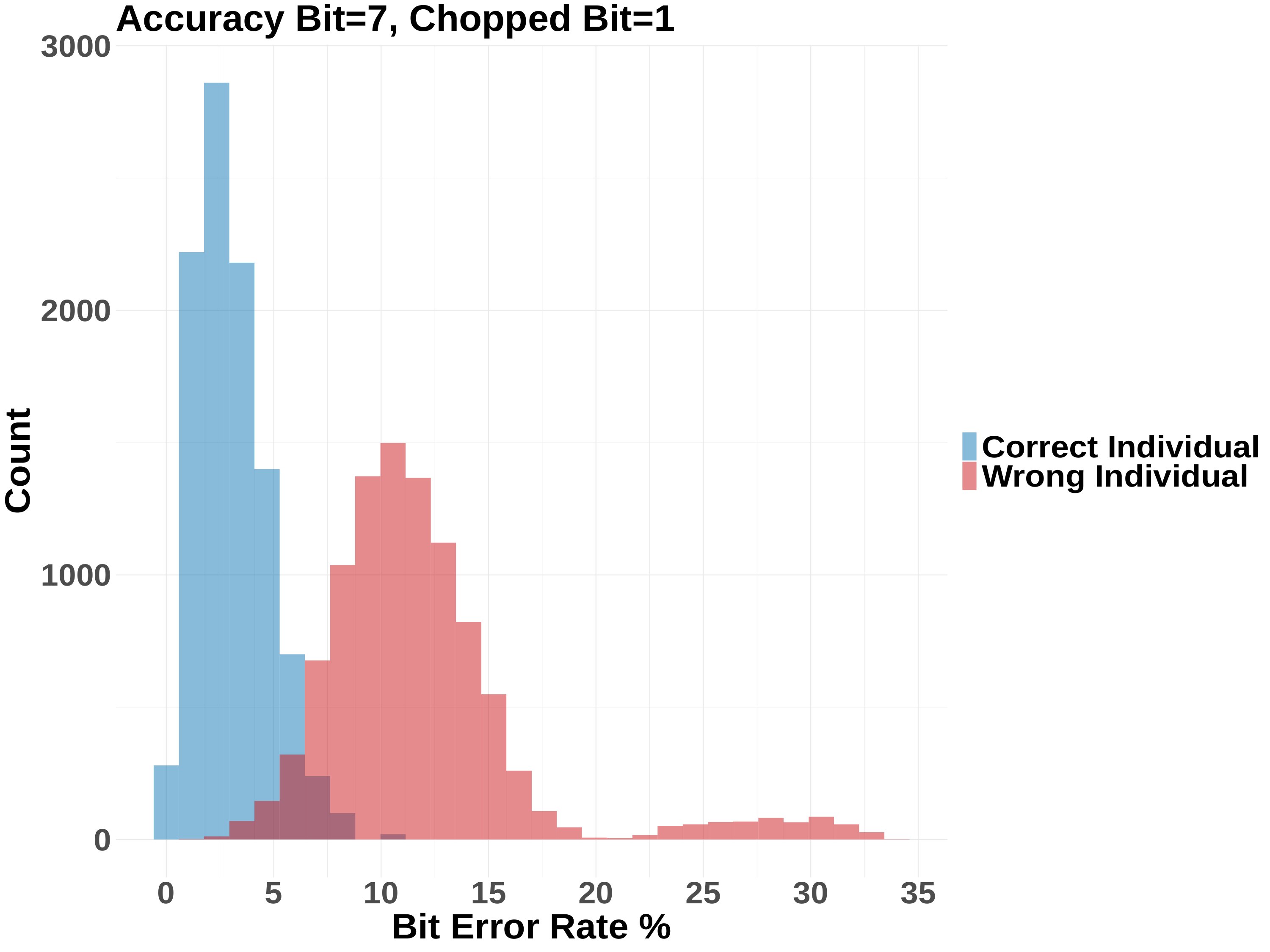}
    \caption{Accuracy Bit=7, Chopped Bit=1}\label{fig:h71}
  \end{subfigure}\\[6pt]
  \begin{subfigure}[b]{0.32\textwidth}
    \includegraphics[width=\textwidth]{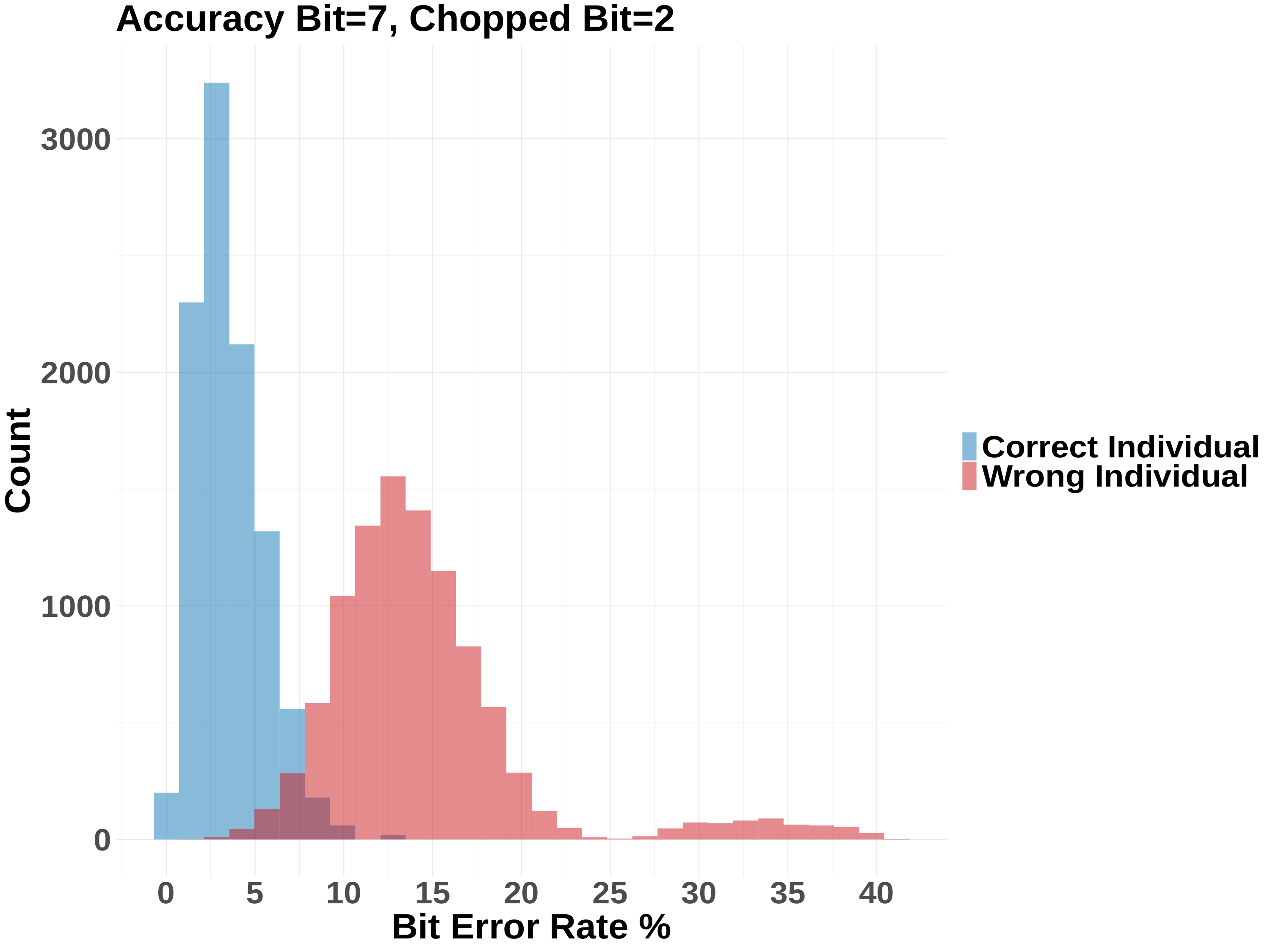}
    \caption{Accuracy Bit=7, Chopped Bit=2}\label{fig:h72}
  \end{subfigure}\hfill
  \begin{subfigure}[b]{0.32\textwidth}
    \includegraphics[width=\textwidth]{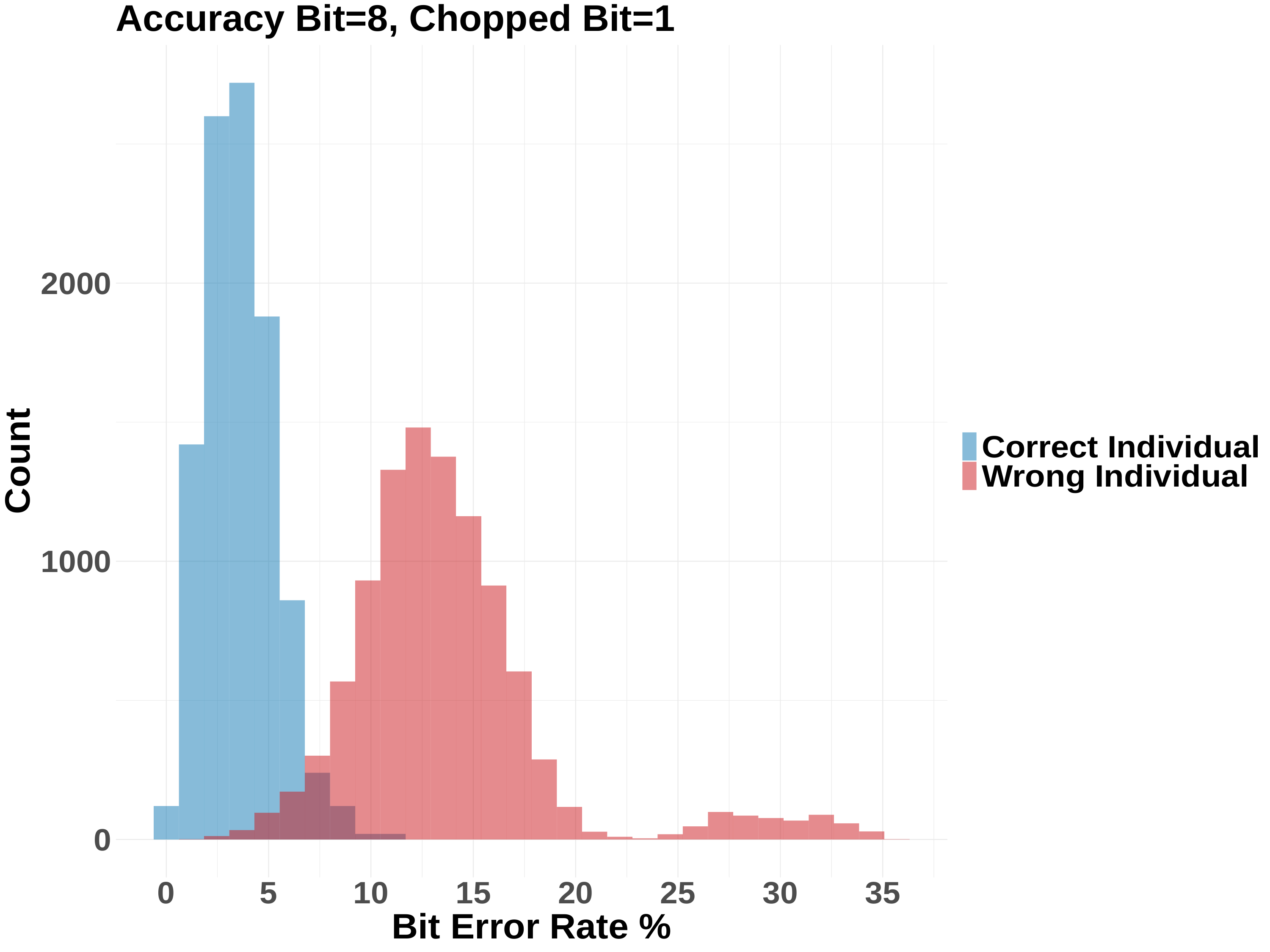}
    \caption{Accuracy Bit=8, Chopped Bit=1}\label{fig:h81}
  \end{subfigure}\hfill
  \begin{subfigure}[b]{0.32\textwidth}
    \includegraphics[width=\textwidth]{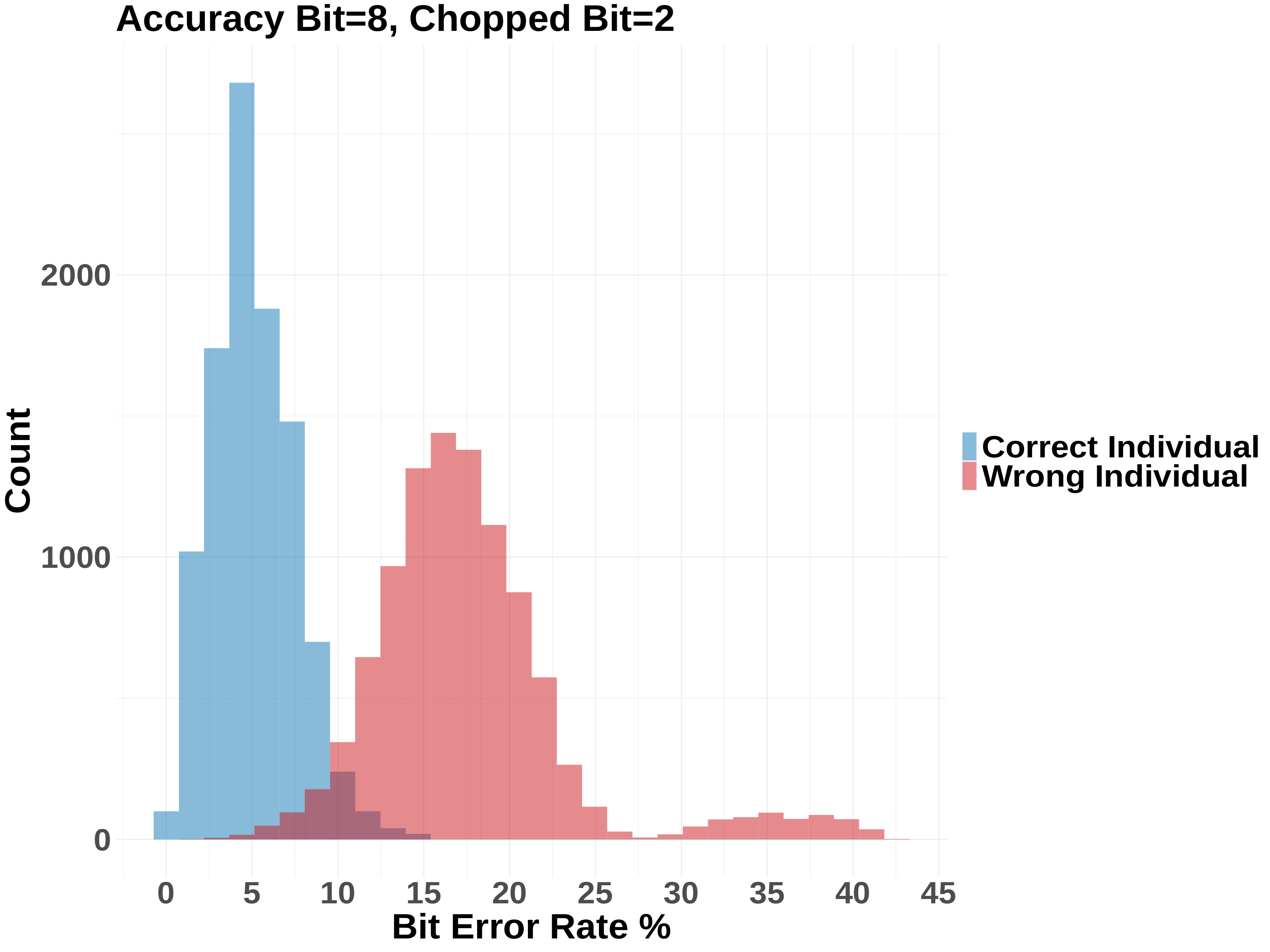}
    \caption{Accuracy Bit=8, Chopped Bit=2}\label{fig:h82}
  \end{subfigure}
  \caption{Comparison of histograms for different configurations for template-less biometrics
           based on (Distance bits, Chopped MSB).}
  \label{fig:histograms}
\end{figure*}

\subsection{SRAM PUF Token Analysis}

In Figure~\ref{fig:puf_error}, we analyze the average key error versus the number of
enrollments. As the number of enrollments increases, the average error clearly decreases.
This behavior arises from a fundamental property of SRAM PUFs: with more reads, unstable
cells can be more effectively identified and flagged. However, the practical limit of
enrollment must be considered, since increasing enrollment time also increases the overall
cost of the process. Our experiments show that 20 enrollments are sufficient to generate
keys with limited error, and the RBC error correction scheme can reliably handle this level
of noise.

\begin{figure}[h]
  \centering
  \includegraphics[width=0.7\linewidth]{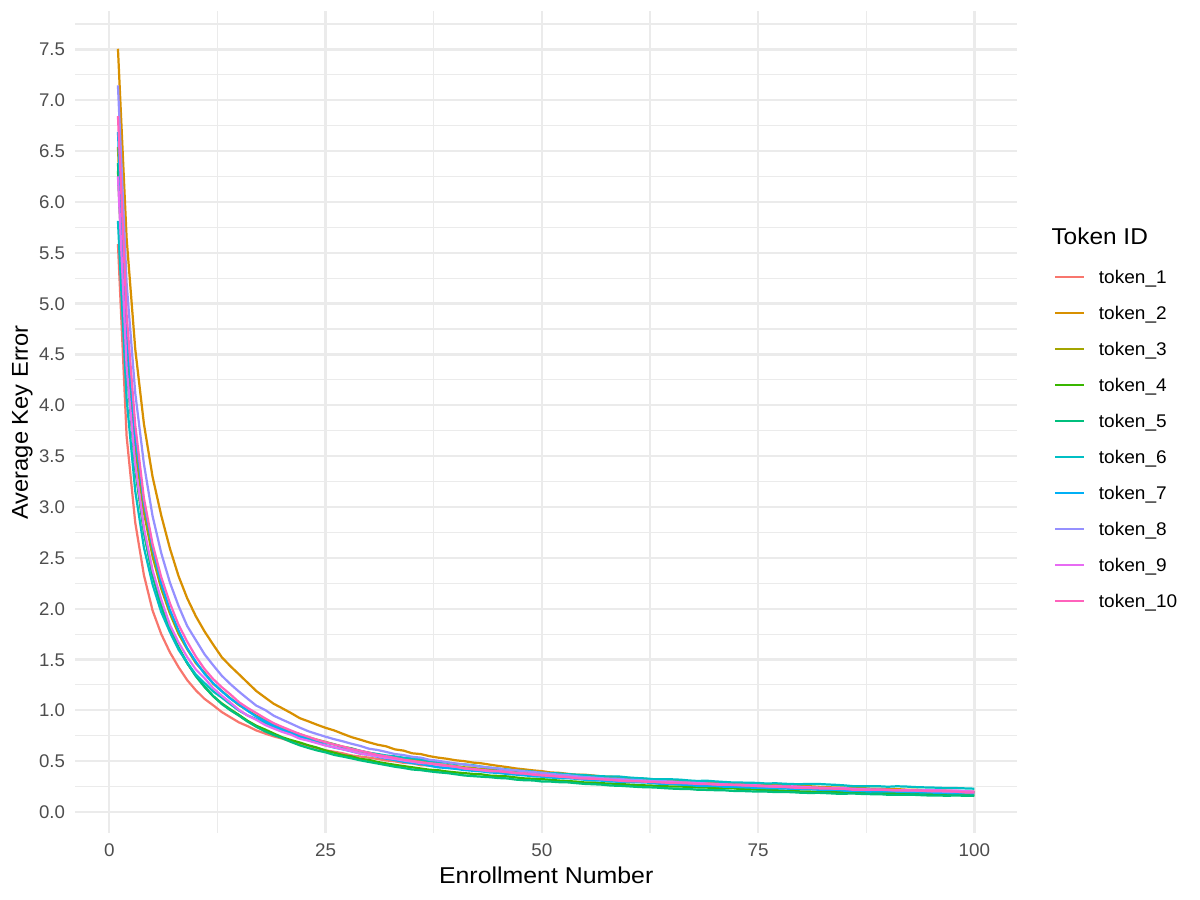}
  \caption{Number of key errors reduces as we increase the number of enrollments for SRAM
           PUF Token.}
  \label{fig:puf_error}
\end{figure}

\subsection{Key Bias Test}

To evaluate the balance between zeros and ones in the generated ephemeral keys, we conducted
exact binomial tests under the null hypothesis of an unbiased bitstream ($p(1)=0.5$), as the
results show in Figure~\ref{fig:keybias}. For the client pool ($n=115{,}200$), the observed
proportion of ones was $\hat{p}=0.4998$ with a 95\% confidence interval of
$[0.4969,\,0.5027]$, yielding a non-significant result ($p=0.9085$). Similarly, the server
pool ($n=115{,}200$) produced $\hat{p}=0.4999$, 95\% CI $[0.4970,\,0.5027]$, with
$p=0.9225$. In both cases, the confidence intervals tightly bound 0.5, and the high
$p$-values indicate no evidence of systematic bias in the relative frequencies of zeros and
ones.

\begin{figure}[h]
  \centering
  \includegraphics[width=0.7\linewidth]{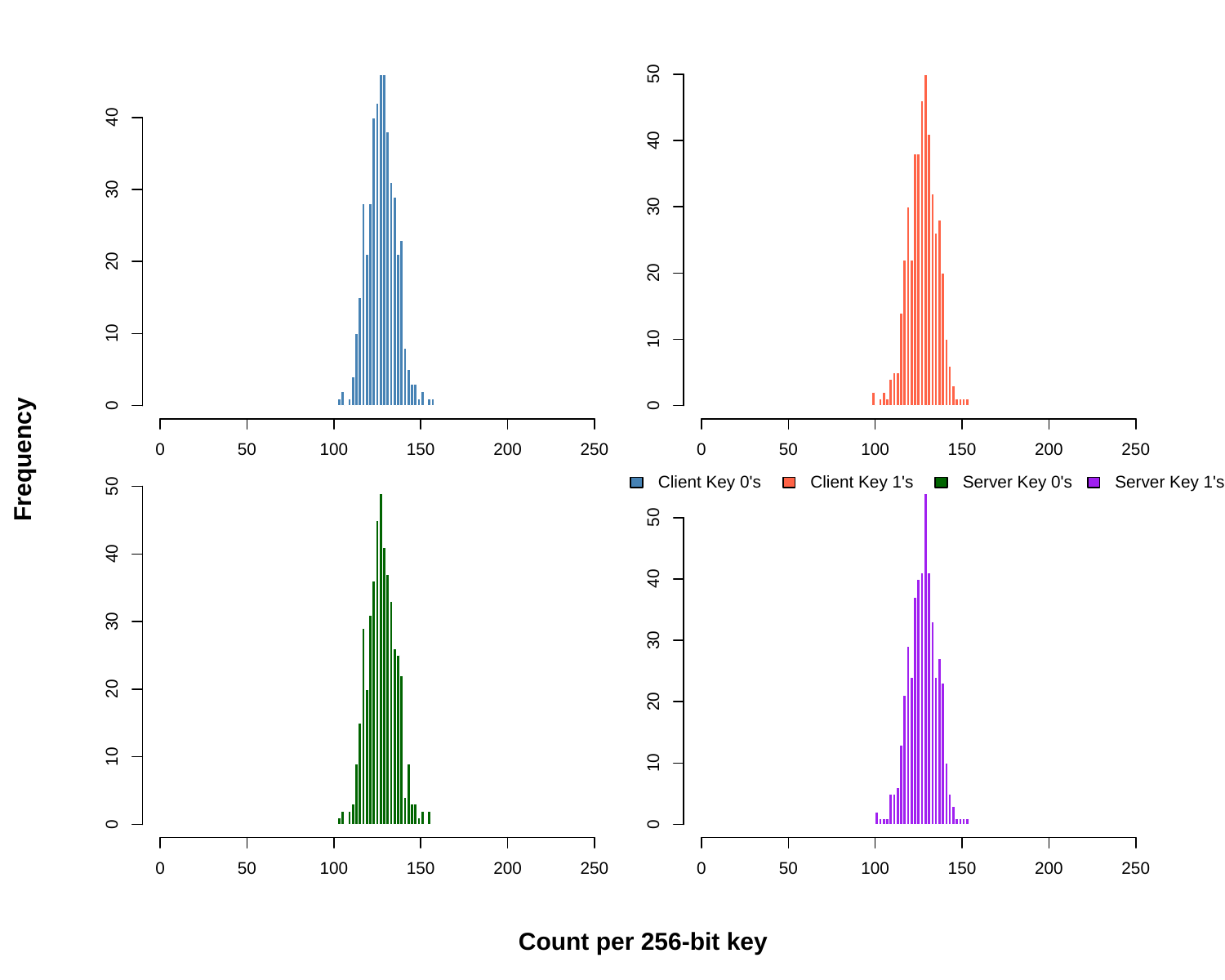}
  \caption{No bias in the generated key.}
  \label{fig:keybias}
\end{figure}

\subsection{Overall Result}

As shown in the top five configuration results in Table~\ref{tab:top5}, our technology
demonstrates promising potential, achieving 0\% FAR and 0\% FRR. This test was performed
using a random sampling of $n=30$ from the dataset. For this evaluation, we assumed that all
users share the same password and the same token, with enrollment performed over 40 cycles.
On average, each enrollment cycle takes about 5 seconds, resulting in a total enrollment time
of less than 3.5 minutes. However, we later determined that 20 enrollment cycles are
sufficient, as the number of fragmentations in RBC can be increased to tolerate one or two
additional bit flips caused by the reduced number of enrollments.

\begin{table}[h]
\centering
\caption{Top 5 configurations with lowest error rates.}
\label{tab:top5}
\rowcolors{2}{BestGreen}{white}
\begin{tabular}{ccccc}
\toprule
\rowcolor{HeaderBlue}
\makecell{\textbf{Frag.}\\\textbf{Level}} &
\makecell{\textbf{Accuracy}\\\textbf{Bits}} &
\makecell{\textbf{Chopped}\\\textbf{MSB}} &
\textbf{FAR (\%)} &
\textbf{FRR (\%)} \\
\midrule
1 & 7 & 1 & \textbf{0.0} & \textbf{0.0} \\
1 & 7 & 2 & \textbf{0.0} & \textbf{0.0} \\
2 & 7 & 2 & \textbf{0.0} & \textbf{0.0} \\
1 & 6 & 2 & 0.0           & 1.7 \\
4 & 7 & 2 & 3.3           & 0.0 \\
\bottomrule
\end{tabular}
\end{table}

%% ============================================================
\section{Security Analysis}
\label{sec:security}

\subsection{Resistance to Sequential Attacks}

Since the authentication template is a combination of template-less biometrics and the
SRAM-PUF token, sequential attacks cannot be mounted successfully. The dynamic and fused
nature of the template prevents adversaries from exploiting repeated authentication attempts.

\subsection{Resistance to Insider Attacks}

The stored template is one-way and unlinkable. Even in the event of insider access, the
protected credentials do not reveal usable information about the user's biometric features or
PUF responses.

\subsection{Resistance to Man-in-the-Middle Attacks}

Although an adversary may intercept random nonces ($RN1$, $RN2$) during transmission, these
values alone are insufficient to reconstruct the authentication key. Without access to the
actual PUF responses and facial biometric data, the intercepted information has no utility.

\subsection{Resistance to Modeling Attacks}

Modeling attacks are ineffective because the system does not rely on static
challenge--response pairs. Instead, it generates a template-less cryptotable, making it
infeasible for an attacker to reconstruct or predict the key through machine learning or
statistical modeling.

\subsection{Mathematical One-wayness of the Cryptotable}

\begin{theorem}
Let $f$ be the cellwise merge operator defined by
\[
D_{ij} =
\begin{cases}
\mathrm{X}, & \text{if } A_{ij}=\mathrm{X} \text{ or } B_{ij}=\mathrm{X},\\
A_{ij}\oplus B_{ij}\oplus C_{ij}, & \text{if } A_{ij},B_{ij}\in\{0,1\},
\end{cases}
\]
with $A,B\in\{0,1,\mathrm{X}\}^{n\times n}$ and $C\in\{0,1\}^{n\times n}$.
Then $f$ is not injective.
\end{theorem}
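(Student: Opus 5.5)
The plan is to view $f$ as a map on triples, $f:(A,B,C)\mapsto D$, with domain $\{0,1,\mathrm{X}\}^{n\times n}\times\{0,1,\mathrm{X}\}^{n\times n}\times\{0,1\}^{n\times n}$ and codomain $\{0,1,\mathrm{X}\}^{n\times n}$. To show it is not injective, I will exhibit two distinct inputs with the same image. I will give both an explicit collision and a counting argument. Either one suffices; the counting version also shows that collisions are overwhelmingly common.

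For the explicit collision, I fix any $n\ge 1$ and work in a single cell, say $(1,1)$, keeping all other cells identical between the two inputs. There are two natural choices.

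\emph{Collision through the X branch.} Take $A_{11}=\mathrm{X}$ with $B_{11}=0$ in the first input, and $A_{11}=\mathrm{X}$ with $B_{11}=1$ in the second. Both yield $D_{11}=\mathrm{X}$. The value of $C_{11}$ is irrelevant there, so even varying $C$ alone gives a collision.

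\emph{Collision through the XOR branch.} Take $(A_{11},B_{11},C_{11})=(0,0,0)$ and $(1,1,0)$. Both give $D_{11}=0$, since the XOR $A\oplus B\oplus C$ is invariant under flipping any two of the three bits.

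In either case the inputs differ while the outputs coincide, which establishes the claim.

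For the counting argument, I would note that the domain has size $3^{n^2}\cdot 3^{n^2}\cdot 2^{n^2}=18^{n^2}$, while the codomain has size $3^{n^2}$. Since $18^{n^2}>3^{n^2}$ for $n\ge1$, the pigeonhole principle rules out injectivity. It also gives an average fiber size of at least $6^{n^2}$.

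Per cell, the count is finer. The value $\mathrm{X}$ has $(9-4)\cdot 2=10$ preimages. Each of $0$ and $1$ has $4$ preimages, namely the four triples $(a,b,c)\in\{0,1\}^3$ with the prescribed parity. This per-cell bound on preimages is the form one would cite when interpreting the result as a one-wayness or unlinkability statement.

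The only point needing care is the interpretation of ``injective.'' If $f$ were instead read as $(A,B)\mapsto D$ with $C$ a fixed parameter, the counting becomes $9^{n^2}$ versus $3^{n^2}$. The explicit collisions still work, because neither of them changes $C$. I will state the theorem for the fixed-$C$ reading, which implies the three-argument version. Beyond this, there is no real obstacle.
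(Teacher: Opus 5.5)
Your proposal is correct and follows the paper's own two proofs. The paper's explicit collision is exactly your XOR-branch one: it changes $(A_{ij},B_{ij})$ from $(0,0)$ to $(1,1)$ with $C$ fixed. Its counting proof is your per-cell observation that each binary output cell has $4$ preimages.

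Your extras are all correct: the $\mathrm{X}$-branch collision, the pigeonhole bound $18^{n^2}>3^{n^2}$, and the exact count of $10$ preimages per $\mathrm{X}$ cell. The last one also sharpens the paper's claim that the total number of preimages is $4^m$; when the other $n^2-m$ cells equal $\mathrm{X}$, the total is actually $4^m\cdot 10^{n^2-m}$.
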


\begin{proof}[Proof 1 (Counting Argument)]
Consider a cell $(i,j)$ with $D_{ij}\in\{0,1\}$. The triple $(A_{ij},B_{ij},C_{ij})$ must
satisfy the linear constraint $A_{ij}\oplus B_{ij}\oplus C_{ij}=D_{ij}$. This is one parity
equation over three binary variables, leaving two degrees of freedom. Thus there are exactly
4 valid preimages for each such output cell. If $m$ cells of $D$ take binary values, the
total number of preimages is $4^m$. Cells with $D_{ij}=\mathrm{X}$ admit even more
preimages, since at least one of $A_{ij}$ or $B_{ij}$ can be freely chosen as X. Hence the
map $f$ is highly many-to-one.
\end{proof}

\begin{proof}[Proof 2 (Explicit Collision)]
Fix a cell $(i,j)$ such that $D_{ij}\in\{0,1\}$. Suppose
$(A_{ij},B_{ij},C_{ij})=(0,0,C_{ij})$ yields $D_{ij}=C_{ij}$. Then the alternative
assignment $(A_{ij},B_{ij},C_{ij})=(1,1,C_{ij})$ also yields the same output, since
$1\oplus 1=0$ and thus $1\oplus 1\oplus C_{ij}=C_{ij}$. Therefore, by modifying $(A,B)$ at
a single cell from $(0,0)$ to $(1,1)$ (keeping $C$ fixed), one obtains a distinct input
triple with identical output $D$. This exhibits a collision and proves that $f$ is not
injective.
\end{proof}

%% ============================================================
\section{Future Work}
\label{sec:future}

This work establishes the groundwork for a practical, application-level zero-knowledge MFA
technology to protect private keys in a distributed network. There remain several avenues for
improvement. One direction is the addition of a sensor-based PUF factor, which is currently
under development and will later be embedded in wearable devices. Another direction is the
design of a challenge--response protocol using 3D biometrics to enhance the robustness of
the current 2D-based system. Furthermore, conducting case studies---such as the protection
of medical records or the integration with blockchain technology---could demonstrate the
applicability of this approach and open new research directions. Finally, a thorough security
analysis of the proposed architecture is currently underway.

%% ============================================================
\section{Conclusion}
\label{sec:conclusion}

This work demonstrates that the proposed bit-chopping of MSBs significantly improves both
the accuracy and security of the architecture. Statistical analysis enabled the optimization
of template-less biometry and SRAM-PUF--based key generation. In the template-less biometric
scheme, insufficient distance accuracy leads to high false-acceptance rates, while excessive
accuracy increases false-reject rates. Analysis of 4--8 ADC precision bits with 0--4 MSBs
removed indicates that 6--7 bits of precision with 1--2 MSBs removed achieve the best
balance. Furthermore, eliminating MSBs reduces false acceptances without substantially
affecting legitimate users.

For the SRAM-PUF scheme, approximately 20 power-off/power-on enrollment cycles are
sufficient to identify and eliminate unstable cells. Beyond this threshold, additional cycles
yield minimal improvement in bit error rates, allowing the enrollment process to remain short
and efficient.

Future work will focus on incorporating additional authentication factors and further
strengthening the security of the protocol.

%% ============================================================
\bibliographystyle{ieeetr}
\bibliography{ref}

@article{ChallengeResponsePair2025a,
  title     = {Challenge--{Response} {Pair} {Mechanisms} and {Multi-Factor} {Authentication} {Schemes} to {Protect} {Private} {Keys}},
  author    = {Cambou, Bertrand Francis and Alam, Mahafujul},
  journal   = {Applied Sciences},
  volume    = {15},
  number    = {6},
  pages     = {3089},
  year      = {2025},
  publisher = {MDPI},
  url       = {https://www.mdpi.com/2076-3417/15/6/3089}
}

@inproceedings{SiliconPhysicalRandom2002,
  title     = {Silicon Physical Random Functions},
  author    = {Gassend, Blaise and Clarke, Dwaine and van Dijk, Marten and Devadas, Srinivas},
  booktitle = {Proceedings of the 9th {ACM} Conference on Computer and Communications Security},
  series    = {{CCS}'02},
  pages     = {148--160},
  year      = {2002},
  publisher = {Association for Computing Machinery},
  doi       = {10.1145/586110.586132}
}

@article{herderPhysicalUnclonableFunctions2014,
  title   = {Physical {Unclonable} {Functions} and {Applications}: {A} Tutorial},
  author  = {Herder, Charles and Yu, Meng-Day and Koushanfar, Farinaz and Devadas, Srinivas},
  journal = {Proceedings of the IEEE},
  volume  = {102},
  number  = {8},
  pages   = {1126--1141},
  year    = {2014},
  doi     = {10.1109/JPROC.2014.2320516}
}

@article{CRYSTALSDilithium,
  title   = {{CRYSTALS-Dilithium}: A Lattice-Based Digital Signature Scheme},
  author  = {Bai, Shi and Ducas, L{\'e}o and Kiltz, Eike and Lepoint, Tancrèd and Lyubashevsky, Vadim and Schwabe, Peter and Seiler, Gregor and Stehl{\'e}, Damien},
  journal = {IACR Transactions on Cryptographic Hardware and Embedded Systems},
  year    = {2021}
}

@inproceedings{StatisticalAnalysisOptimize2020,
  title     = {Statistical {Analysis} to {Optimize} the {Generation} of {Cryptographic} {Keys} from {Physical} {Unclonable} {Functions}},
  author    = {Cambou, Bertrand and Mohammadi, Mohammad and Philabaum, Christopher and Booher, Duane},
  booktitle = {Intelligent Computing},
  editor    = {Arai, Kohei and Kapoor, Supriya and Bhatia, Rahul},
  pages     = {302--321},
  year      = {2020},
  publisher = {Springer International Publishing}
}

@article{PUFBasedMutualAuthentication2023,
  title   = {{PUF}-Based Mutual Authentication and Key Exchange Protocol for Peer-to-Peer {IoT} Applications},
  author  = {Zheng, Yue and Liu, Wenye and Gu, Chongyan and Chang, Chip-Hong},
  journal = {IEEE Transactions on Dependable and Secure Computing},
  volume  = {20},
  number  = {4},
  pages   = {3299--3316},
  year    = {2023},
  doi     = {10.1109/TDSC.2022.3193570}
}

@article{UDhashingPhysicalUnclonable2019,
  title   = {{UDhashing}: Physical Unclonable Function-Based User-Device Hash for Endpoint Authentication},
  author  = {Zheng, Yue and Cao, Yuan and Chang, Chip-Hong},
  journal = {IEEE Transactions on Industrial Electronics},
  volume  = {66},
  number  = {12},
  pages   = {9559--9570},
  year    = {2019},
  doi     = {10.1109/TIE.2019.2893831}
}

@inproceedings{SecureMutualAuthentication2021,
  title     = {Secure Mutual Authentication and Key-Exchange Protocol between {PUF}-Embedded {IoT} Endpoints},
  author    = {Zheng, Yue and Chang, Chip-Hong},
  booktitle = {2021 {IEEE} International Symposium on Circuits and Systems ({ISCAS})},
  pages     = {1--5},
  year      = {2021},
  doi       = {10.1109/ISCAS51556.2021.9401135}
}

@inproceedings{PracticalSecureIoT2016,
  title     = {Practical and Secure {IoT} Device Authentication Using Physical Unclonable Functions},
  author    = {Wallrabenstein, John Ross},
  booktitle = {2016 {IEEE} 4th International Conference on Future Internet of Things and Cloud ({FiCloud})},
  pages     = {99--106},
  year      = {2016},
  doi       = {10.1109/FiCloud.2016.22}
}

@article{MutualAuthenticationIoT2017,
  title   = {Mutual Authentication in {IoT} Systems Using Physical Unclonable Functions},
  author  = {Aman, Muhammad Naveed and Chua, Kee Chaing and Sikdar, Biplab},
  journal = {IEEE Internet of Things Journal},
  volume  = {4},
  number  = {5},
  pages   = {1327--1340},
  year    = {2017},
  doi     = {10.1109/JIOT.2017.2703088}
}

@inproceedings{LightweightHighlySecure2017,
  title     = {Lightweight Highly Secure {PUF} Protocol for Mutual Authentication and Secret Message Exchange},
  author    = {Idriss, Tarek and Bayoumi, Magdy},
  booktitle = {2017 {IEEE} International Conference on {RFID} Technology \& Application ({RFID-TA})},
  pages     = {214--219},
  year      = {2017},
  doi       = {10.1109/RFID-TA.2017.8098893}
}

@inproceedings{FacialBiohashingBased2018,
  title     = {Facial Biohashing Based User-Device Physical Unclonable Function for Bring Your Own Device Security},
  author    = {Zheng, Yue and Cao, Yuan and Chang, Chip-Hong},
  booktitle = {2018 {IEEE} International Conference on Consumer Electronics ({ICCE})},
  pages     = {1--6},
  year      = {2018},
  doi       = {10.1109/ICCE.2018.8326074}
}

@article{SecureEfficientAKE2023,
  title   = {A Secure and Efficient {AKE} Scheme for {IoT} Devices Using {PUF} and Cancellable Biometrics},
  author  = {Zahednejad, Behnam and Gao, Chong-zhi},
  journal = {Internet of Things},
  volume  = {24},
  pages   = {100937},
  year    = {2023},
  doi     = {10.1016/j.iot.2023.100937}
}

@misc{GeneratedPhotos,
  author       = {{Generated Photos}},
  title        = {Generated Photos: {AI} Generated Faces},
  howpublished = {\url{https://generated.photos/}},
  note         = {Accessed: 2025-08-29},
  year         = {2025}
}

@article{DlibmlMachineLearning,
  title   = {Dlib-ml: A Machine Learning Toolkit},
  author  = {King, Davis E.},
  journal = {Journal of Machine Learning Research},
  volume  = {10},
  pages   = {1755--1758},
  year    = {2009}
}

\end{document}